\renewcommand{\footnote}[1]{}  
\def\orcid#1{} 
\def\orcid#1{\href{#1}{\includegraphics{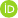}}} 
\begin{document}

\title{Whole Heart Perfusion with High-Multiband Simultaneous  Multislice  Imaging  via  Linear  Phase  Modulated  Extended Field of  View (SMILE)}

\author[1,3]{Shen Zhao}{\orcid{0000-0003-1736-6378}}
\author[1]{Junyu Wang}{\orcid{0000-0001-8314-4525}}
\author[1,3]{Xitong Wang}{\orcid{0009-0002-8457-9189}}
\author[1,3]{Sizhuo Liu}{\orcid{0000-0002-3248-0737}}
\author[1,3]{Quan Chen}{\orcid{0000-0002-5392-8000}}
\author[1]{Kevin Kai Li}{}
\author[3]{Yoo Jin Lee}{\orcid{0009-0008-2348-2520}}
\author[1,2,3,4]{Michael Salerno}{\orcid{0000-0001-7051-1031}}

\authormark{Shen Zhao \textsc{et al}}

\address[1]{\orgdiv{Department of Cardiovascular Medicine}, \orgname{Stanford University}, \orgaddress{\state{CA}, \country{USA}}}
\address[2]{\orgdiv{Department of Radiology}, \orgname{Stanford University}, \orgaddress{\state{CA}, \country{USA}}}
\address[3]{\orgdiv{Department of Radiology \& Biomedical Imaging}, \orgname{University of California, San Francisco}, \orgaddress{\state{CA}, \country{USA}}}
\address[4]{\orgdiv{Department of Medicine, Division of Cardiology}, \orgname{University of California, San Francisco}, \orgaddress{\state{CA}, \country{USA}}}

\corres{Michael Salerno, \email{msalerno@stanford.edu, 	michael.salerno@ucsf.edu}}

\presentaddress{35 Medical Center Way, M314A
San Francisco CA 94143}

\finfo{This work was partially supported by \fundingAgency{National Institutes of Health} grant \fundingNumber{R01HL131919, R01HL155962-01}}

\abstract{
\section{Purpose} To develop a simultaneous multislice  (SMS) first-pass perfusion technique that can achieve whole heart coverage with high multi-band factors, while avoiding the issue of slice leakage.
\section{Methods} The proposed Simultaneous Multislice Imaging via Linear phase modulated Extended field of view (SMILE) treats the SMS acquisition and reconstruction within an extended field of view framework, allowing arbitrarily under-sampling of phase encoding lines of the extended k-space matrix and enabling the direct application of 2D parallel imaging reconstruction techniques. We presented a theoretical framework that offers insights into the performance of SMILE. We performed retrospective comparison on 28 subjects and prospective perfusion experiments on 43 patients undergoing routine clinical CMR studies with SMILE at multiband (MB) factors of 3-5, with a net acceleration rate ($R$) of 8 and 10 respectively, and compared SMILE to conventional SMS techniques using standard FOV 2D CAIPI acquisition and standard 2D slice separation techniques including split-slice GRAPPA and ROCK-SPIRiT.

\section{Results} Retrospective studies demonstrated 5.2 to 8.0 dB improvement in signal to error ratio (SER) of SMILE over CAIPI perfusion. Prospective studies showed good image quality with grades of 4.1 $\pm$ 0.7 for MB = 3, $R$ = 8 and 3.5 $\pm$ 1.0 for MB = 5, $R$ = 10. (5-point Likert Scale)

\section{Conclusion} The theoretical derivation and experimental results validate the SMILE's improved performance at high acceleration and MB as compared to the existing 2D CAIPI SMS acquisition and reconstruction techniques for first-pass myocardial perfusion imaging.
}
\keywords{myocardial perfusion, parallel imaging, SMS,  CAIPIRINHA, slice leakage}
\wordcount{4760}


\maketitle

\section{Introduction}
Cardiac magnetic resonance imaging (CMR) vasodilator stress myocardial perfusion imaging has demonstrated important diagnostic and prognostic utility for the evaluation of ischemic heart disease\cite{canet1999magnetic}.  First-pass perfusion is one of the most demanding CMR applications as multiple saturation-recovery images must be acquired in each heartbeat at stress heart rates, which can exceed 110 beats per minute.  This typically results in partial coverage of the ventricle to 3-4 slices with $2\times 2$ mm$^2$ spatial resolution with most clinically available pulse sequences. 

To address these challenges, researchers have explored highly accelerated spatiotemporal undersampling paired with advanced reconstruction techniques like compressed sensing \cite{otazo2010combination}. Additionally, to further enhance efficiency, non-Cartesian acquisition strategies such as spiral \cite{yang2019whole} and radial \cite{kholmovski2007perfusion} trajectories have been investigated. These approaches are often combined with the aforementioned acceleration techniques to optimize image quality and acquisition speed \cite{sharif2014towards, yang2016first, shin2013three}. To reduce the overhead of separate saturation preparation blocks for each slice, multiple investigators have also pursued simultaneous multislice (SMS) techniques \cite{yang2019whole, wang2016radial, mcelroy2022simultaneous, sun2022slice, demirel2023signal} including non-Cartesian approaches to improve acquisition efficiency to improve slice coverage to 6-9 slices per heartbeat, enabling extended ventricular coverage.  With most techniques to date, SMS perfusion has been limited to multiband (MB) factors of 2-3.

Various SMS acquisition and reconstruction techniques have been proposed. Acquisition strategies include Hadamard \cite{souza1988sima}, POMP \cite{glover1991phase}, SMS (w/o phase modulation) \cite{larkman2001use}, CAIPIRINHA \cite{breuer2005controlled}, and 3D acquisition \cite{zahneisen2014three, zhu2016hybrid}. SMS-specific reconstruction algorithms include SENSE-GRAPPA \cite{blaimer2006accelerated}, Slice-GRAPPA (SG) \cite{setsompop2012blipped}, Split Slice-GRAPPA (SPSG) \cite{cauley2014interslice}, and ROCK-SPIRiT (RS) \cite{demirel2021improved}.

A number of these approaches have been applied to first-pass perfusion imaging. A significant limitation of SMS imaging has been slice leakage, which limits image quality. Slice leakage results from interference from multiple simultaneously excited slices.  Most techniques have sought to use kernel-based approaches to cancel signal interference from adjacent slices.

Our work \cite{zhao2024whole} analyzed and generalized the POMP extended FOV concept and proposed a concise accelerated SMS acquisition and reconstruction framework: Simultaneous Multislice Imaging via Linear phase modulated Extended field of view (SMILE). SMILE transforms the SMS problem into a 2D parallel imaging task, i.e., SMILE transforms the superimposed slice separation problem into a 2D de-aliasing problem which enables the direct application of parallel imaging reconstruction algorithms and can theoretically avoid the slice leakage. For SMILE, the spatial extent of the extended FOV does not need to be a multiple of the excited number of slices, i.e., the MB factor, and the net acceleration rate $R$ is solely determined by the net in-plane acceleration relative to the extended FOV, which is different from traditional SMS techniques and POMP.  Furthermore, by choosing an appropriate spatial-temporal acquisition, SMILE is auto-calibrating without requiring separate acquisition of the individual slices.  

In this manuscript, we evaluated the performance of the SMILE framework for first-pass perfusion images, with MB = 3 and 5, and a 3$\times$ or 5$\times$ extended FOV respectively, comparing it to traditional slice-separation approaches such as SPSG and RS with MB factors of 3 or 5, which rely on CAIPI phase modulation within a single-slice FOV.  For a fair comparison both CAIPI and SMILE techniques had the same net acceleration, and thus acquired the same number of phase encoding (PE) lines. This evaluation included both retrospective downsampling and prospective undersampling in in-vivo perfusion studies, demonstrating slice-leakage-free reconstruction and improved performance. The prospective studies achieved whole-heart coverage with net acceleration rates  $R$ of up to 10, providing whole heart coverage with $1.5 \times 1.5$ mm$^2$ in-plane resolution using 2 to 3 saturation recovery blocks.

\section{Theory}\label{sec2}
\label{subsection: SMILE}
Although different perspectives can be employed in implementing SMS acquisition or reconstruction including the stack of 2D, 3D \cite{zahneisen2014three, zhu2016hybrid}, extended 2D FOV along readout (RO) \cite{demirel2021improved}, and extended 2D FOV along PE \cite{glover1991phase}, the information contained in the 2D slice images is identical, as they are imaging the same real object. Despite this theoretical equivalence, acquisition strategies and reconstruction performance differences may still arise. For instance, the discontinuity in the inter-slice coil sensitivity maps, in contrast to conventional 3D imaging, impedes the construction of an appropriate 3D k-space reconstruction kernel.  Differences in sampling strategy for a give reconstruction approach can also impact the conditioning number of the reconstruction, impacting reconstruction quality.  The extended 2D FOV along RO perspective and 2D slice separation approaches using conventional phase modulation cannot sample all k-space points for the extended FOV and thus have a limited sampling degree of freedom.

\subsection{SMILE Acquisition}
Following our subsequent analysis, we present SMILE, a solution that can overcome the intrinsic limitations of current methods, facilitating the direct incorporation of standard 2D parallel imaging reconstruction techniques. The SMILE acquisition has two main steps:
\begin{itemize}
    \item Adopt an extended FOV along PE to encompass all slice image regions of interest without overlapping. Note, the extended FOV does not necessarily have to be a multiple of the MB factor.
    \item Phase modulate PE lines of each slice linearly to distribute slices' ROI within the extended FOV. Note that different slices can be shifted arbitrarily within the extended FOV.
\end{itemize}
Slice phase modulation can be achieved using RF excitation phases or gradient blips, with the latter being restricted to inter-slice phase differences linearly proportional to inter-slice distances. RF phase modulation, given its higher degree of freedom, allows for arbitrary inter-slice phase differences. Notably, SMILE phase modulation need not be restricted to CAIPI; any RF modulation that shifts slice images without ROI overlap is possible. SMILE permits full sampling and arbitrary PE undersampling over the extended FOV, enhancing incoherent sampling for compressed sensing reconstruction versus superimposed single-slice FOV k-space acquisition \cite{larkman2001use, breuer2005controlled, demirel2021improved}. SMILE does not necessarily require separate calibration data. Autocalibration signal region or extra-dimensional averaging \cite{feng2014golden} allows valid extraction of linear prediction kernels or coil sensitivity maps. However, systematic studies investigating sampling and reconstruction from this perspective are lacking.

\subsection{SMILE Reconstruction}
Assuming the $c$-th continuous coil sensitivity map, $\bm{S}_c$, has a bandwidth size of $[B_x, B_y]$. As illustrated in Fig.~\ref{fig: CSM Support}, the discrete k-space bandwidth of the coil sensitivity map for a single-slice FOV, with a sampling grid of $\Delta k_x, \Delta k_y$, amounts to $[C_x, C_y] = \big[\lfloor \tfrac{B_x}{\Delta k_x} \rfloor, \lfloor \tfrac{B_y}{\Delta k_y} \rfloor\big]$. For an $n \times$ extended FOV along PE, this value changes to $[C_x, D_y] = \big[\lfloor \tfrac{B_x}{\Delta k_x} \rfloor, \lfloor \tfrac{n B_y}{\Delta k_y} \rfloor\big]$.

\begin{figure}[htbp]
    \centering
    \includegraphics[width = \columnwidth]{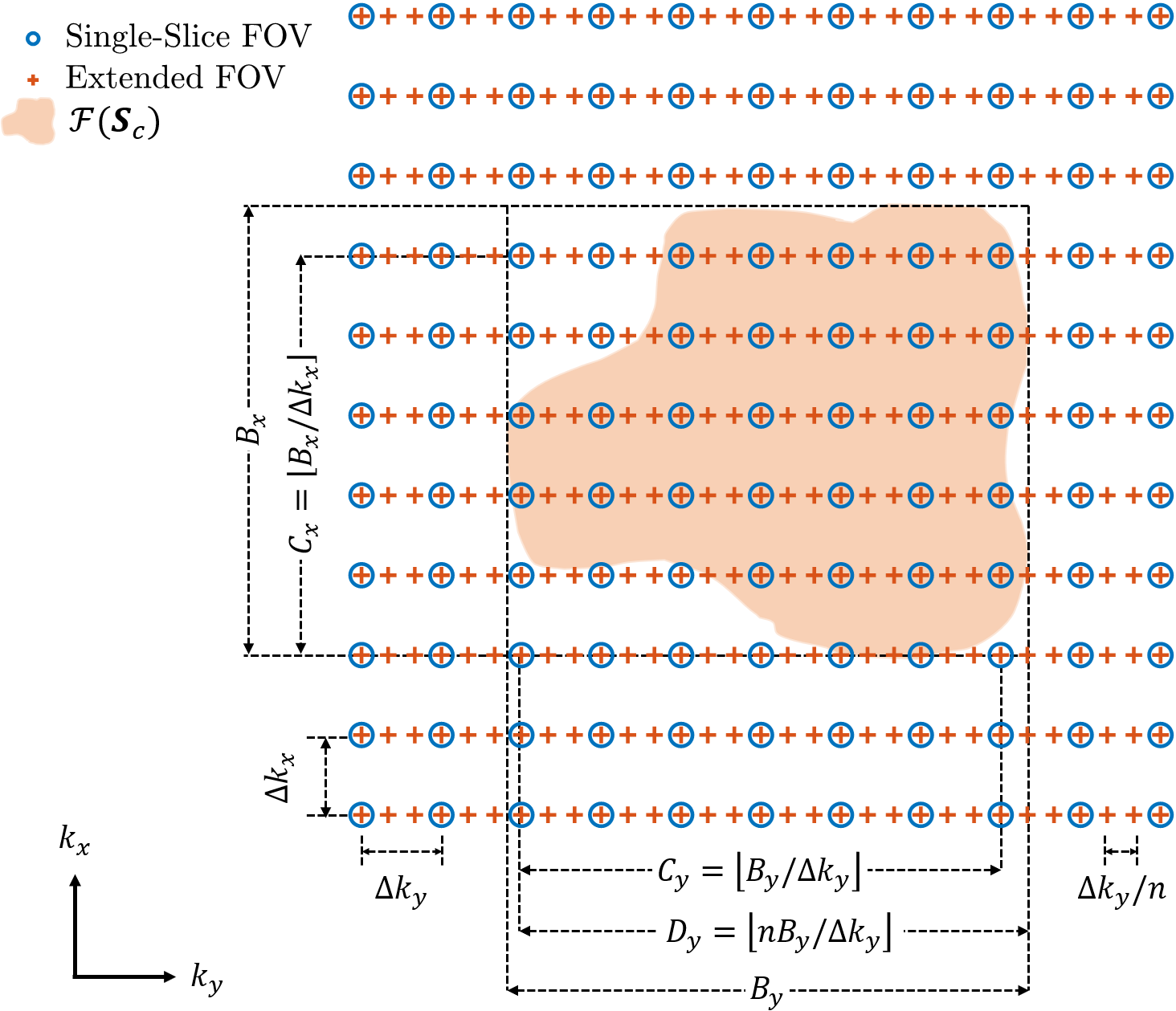}
    \caption{Discrete support of coil sensitivity map $\bm{S}_c$ in different k-space sampling grid sizes $\Delta k_x, \Delta k_y$ or $\Delta k_y/n$.}
    \label{fig: CSM Support}
\end{figure}

This illustration has an important implication for choosing an appropriate kernel size for SMILE.  For instance, let $n \in \mathbb{Z}_{>0}$, $C_y = \lfloor \frac{D_y}{n} \rfloor >> 1$. If, for a single-slice field of view (FOV), the smallest reconstruction kernel size based on independent band-limited coil sensitivity maps is $[E_x^\star, E_y^\star] = [6,6]$, then for $n\times$ FOV along PE, the kernel size scales approximately as $[E_x^\star, E_y^\star] \approx [6,6n]$. This is rather atypical for 2D parallel imaging \cite{shin2014calibrationless, gungor2014subspace, griswold2002generalized, lustig2010spirit, zhang2011parallel,haldar2013low, zhao2021high, zhao2022high}. Using an insufficient kernel size can impact reconstruction performance. A detailed theoretical justification is provided in the Supporting Information.

In SMILE discrete coil sensitivity map extraction, referencing Fig. \ref{fig: CSM Support}, both the discrete calibration and auto-calibration signal (ACS) regions need to expand along the PE dimension (due to smaller $\Delta k_y$ to support the extended FOV) to adequately capture coil sensitivity maps. This up-scaled kernel size should be integrated into methods like ESPIRiT \cite{uecker2014espirit}, for which when slice overlaps occur, multiple sensitivity map sets might be needed.

Our analysis reveals that SMILE allows for any compatible 2D parallel imaging reconstruction methods (SENSE or k-space kernel-based), making SMS-specific methods like SG, SPSG, and RS unnecessary for SMILE. Other MR physics priors, such as smooth-phase constraints, can also be incorporated into SMILE reconstruction if inter-slice and intra-slice phases are both smooth \cite{lobosadvanced}.

\subsubsection{Content Dependence and Slice Leakage Analysis}
Innately, SENSE-based reconstruction can bypass bias and effectively remove dependence on reconstruction content, unlike k-space-based reconstruction. K-space linear predictability for 2D parallel imaging mainly stems from coil sensitivity map smoothness and limited image support \cite{haldar2020linear}. Despite the lack of a theoretical guarantee on GRAPPA's linear prediction kernels depending solely on coil sensitivity maps, especially with limited image support, past studies and practical GRAPPA implementation affirm its robust performance and approximate content-independent kernel across various scenarios \cite{setsompop2012blipped}. This suggests the selected kernel size $[E_x, E_y]$ adequately leverages bandlimited $\bm{S}_c$ prior yet may fall short for the limited image support prior. When all slice images in the single-slice FOV possess this characteristic, the potential exists for finding content-independent kernels for the extended 2D FOV along PE under suitable conditions.

As per analysis, to capture coil sensitivity map smoothness, the smallest kernel for each slice should increase along PE by approx $\frac{D_y-1}{C_y-1}\approx n$. Concurrently, limited image support for each slice within the single-slice FOV, corresponds to a kernel magnified by $n$ along PE in the extended FOV, according to Fourier transform properties of zero padding. Consequently, if the smallest linear prediction kernel for each slice in a single-slice FOV is content-independent, then it is likely that the smallest kernel for an extended FOV—whether with narrow, or no inter-slice gaps, or even partial overlaps—will also remain content-independent. This is because the extended FOV will not significantly increase the size of the connected empty background region along the PE direction. (see Fig. 4 in reference \cite{haldar2020linear} for an illustration of the empty region and linear prediction). As a result, achieving content independence through proper SMILE acquisition becomes simpler.


As emphasized in the review \cite{moeller2021diffusion}, concepts like slice leakage or ``slice blocking'' are not applied to conventional parallel imaging but arise from SG or SPSG implementation due to their bias and kernel content-dependence vulnerability. Notably, SMILE, like parallel imaging, does not factor in slice leakage or blockage. Thus, leakage assessment tools like the linear system leakage approach (LSLA) \cite{cauley2014interslice} are not fit for SMILE evaluation. Assuming the SMILE reconstruction kernel size is sufficient, the resultant reconstruction should be unbiased, affected only by noise perturbations. When reconstructions struggle at high acceleration rate $R$, one can quantify residual aliasing using image quality metrics such as mean squared error.

\subsubsection{Sampling Pattern and g-Factor Analysis}
Unlike single-slice parallel imaging, where undersampling causes partial overlying of image content's alias onto itself, undersampling in SMILE extended FOV might result in specific slices directly overlapping others. This is more likely when all slice supports are identical and uniformly dispersed in the extended FOV with uniform sampling at $R=$ integer multiple of the MB factor. If $R$ is co-prime to the MB factor, the situation resembles a single-slice case, with a slice alias partially overlapping another. While optimal sampling for SMILE extended FOV remains under investigation, we can identify some worst-case SMILE sampling scenarios such as when $R=$ integer multiple of the MB factor, and slices' coil sensitivity maps are very similar.

For illustration, we used a genetic algorithm (GA) to optimize sampling for minimal average g-factor, utilizing one frame of a fully sampled OCMR short-axis cardiac cine dataset with MB = 3, uniformly distributed across the $n=3\times$ extended FOV at various accelerations \cite{chen2020ocmr}. The initial population of 50 included uniform sampling, uniform density Poisson sampling, CAVA variable density sampling \cite{rich2020cartesian}, and random sampling patterns. CAVA variable density sampling produces a sequence of PE indices based on the golden ratio increment and has the benefit of the capability of retrospective adjustment of temporal resolution, maintaining incoherence, and ensuring a fully sampled time-averaged k-space to facilitate sensitivity map estimation.  The g-factor w.r.t. SENSE reconstruction was calculated using the pseudo multiple replica method \cite{robson2008comprehensive} with 64 Monte Carlo simulations. Results are shown in Fig.~\ref{fig: sampling PSF g-factor}. For $R=3$ and $R=6$ (integer multiples of MB), the GA mask significantly differs from uniform sampling, trending towards Poisson sampling at $R=3$ and the CAVA pattern at $R=6$. In these cases $R$ = integer multiple of MB, which means other slice content will directly overlap onto one slice, making the reconstruction likely ill-conditioned. For $R=4,5,7,8$ (relatively co-prime to MB = 3), the GA sampling pattern resembles uniform sampling for $R=4,5$, but shifts towards a variable density, similar to CAVA, at higher accelerations ($R=7,8$). As illustrated by the point spread function magnitude, $R=4,5$ exhibit $R-1$ side lobes, while $R=3,6,7,8$ present a single peak. 

Note that this optimized sampling pattern is specific to SENSE reconstruction. For other reconstruction methods, such as kernel-based techniques, spatial-temporal joint reconstructions using compressed sensing, global or local low-rankness, and machine learning, the optimal sampling pattern is still under investigation.

\begin{figure*}[htbp]
    \centering
    \includegraphics[width = \textwidth]{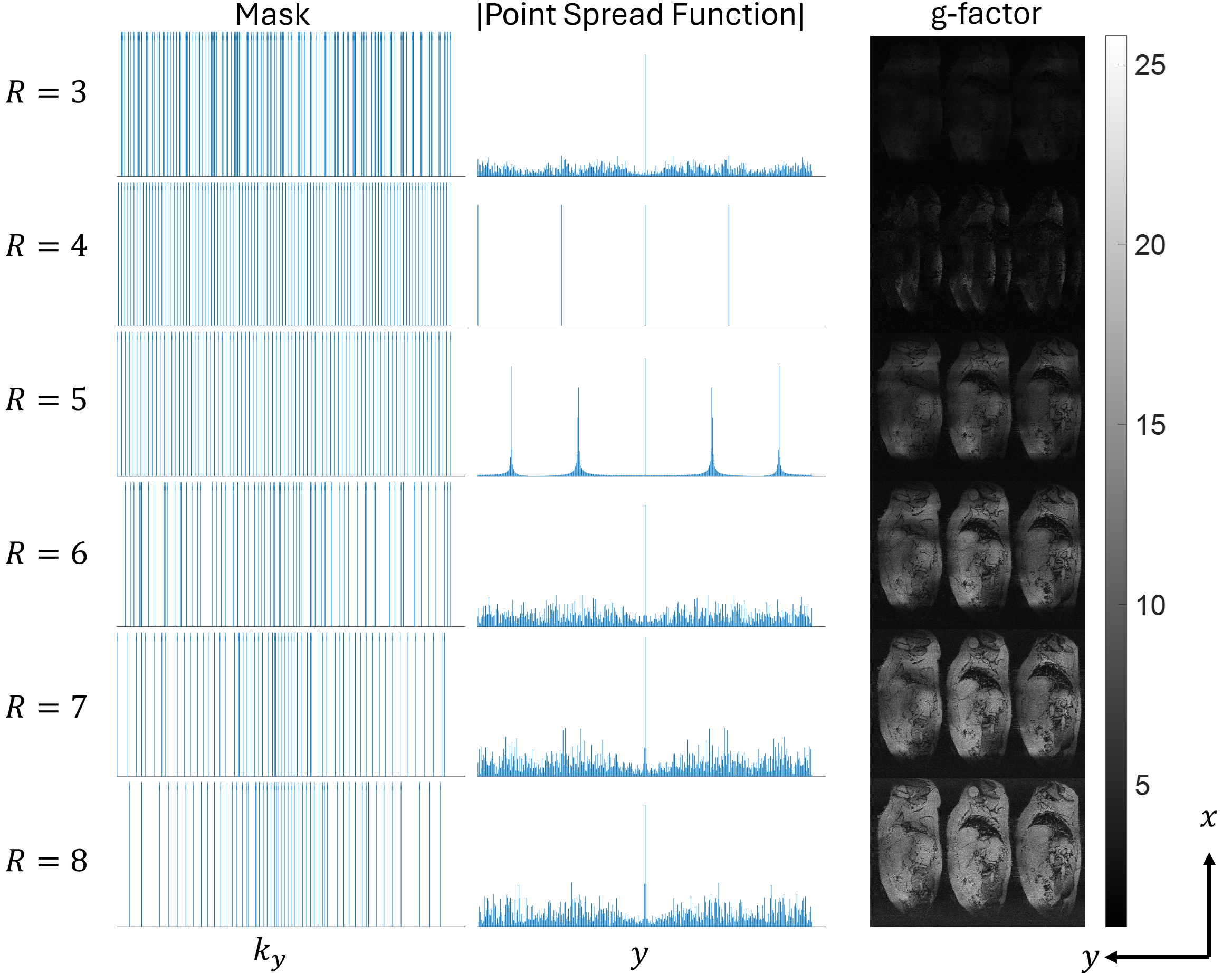}        
    \caption{Left to right: GA optimized sampling pattern, the absolute value of the point spread function, and the g-factor calculated using a pseudo multiple replica method. MB = 3 uniformly distributed slices in the $n=$MB$=3$ extended FOV pertain to an OCMR short-axis cardiac cine dataset. The GA involves 50 generations of population size 50 and includes uniform, Poisson, and CAVA variable density sampling in the first generations. Both the GA and g-factor calculations employ SENSE reconstruction. We conduct 64 Monte Carlo simulation trials for g-factor calculation. Empty background areas are cropped for space efficiency.}
    
    \label{fig: sampling PSF g-factor}
\end{figure*}

\section{Methods}
\paragraph{Retrospective Downsampled Perfusion Study}
Resting perfusion scans from 28 subjects undergoing clinically ordered CMR were utilized for this study under an IRB-approved protocol. Images were acquired with free-breathing during intravenous injection of 0.075 mmol/kg of Gadovist at 3 cc/second over 60 heartbeats in a 3T scanner (Siemens MAGNETOM Skyra). Each study included 3 to 5 short-axis perfusion slice locations.  Images were acquired with the following sequence parameters: GRE with separate calibration data, resolution 2.1$\times$2.4 mm$^2$, $R=2$ uniform sampling, SR preparation 5 pulse train, TI 100 ms, TR 2.63 ms, TE 1.17 ms.  Images were reconstructed with per frame HICU (HICU 2D) \cite{zhao2021high} to serve as the reference. Before downsampling, the datasets were compressed to 9 coils using SVD coil compression. We selected the same net acceleration rate $R$ for both CAIPI and SMILE acquisitions, meaning that the total number of PE lines acquired was identical for both techniques for all comparisons. For SMILE, the net acceleration rate $R$ corresponds to the in-plane acceleration rate relative to the fully sampled extended FOV for a given resolution, regardless of the MB factor; whereas for CAIPI, $R$ is the product of the in-plane acceleration rate and the MB factor. Note that the extended factor of FOV and MB factor do not need to be the same for SMILE.

For the MB = 3 experiments, we used a net $R=6$ for both CAIPI and SMILE acquisition and reconstruction on 16 datasets.  The SMILE and CAPI acquisitions and reconstruction had the same total number of PE lines.   For CAIPI, we employed an in-plane $R=2$ uniform sampling pattern with traditional phase modulation, while for SMILE, we used a $3\times$ extended FOV with a CAVA sampling pattern and uniform distributed slices in the extended FOV. This approach ensured that both CAIPI and SMILE acquisitions had the same total number of PE lines. We adopted CAVA sampling not only for its demonstrated effectiveness in 2D per-frame SMILE reconstruction but also for its additional benefit of enhancing incoherence in the transform temporal dimension.

Similarly, for the MB = 5 experiment, we used a net $R=5$ for both techniques on 12 datasets, with a $5\times$ extended FOV and uniform distributed slices in the extended FOV for SMILE and no in-plane downsampling for CAIPI. This choice was driven by the limited performance of conventional methods with a $2\times$ in-plane acceleration for MB = 5 CAIPI acquisition, which would result in a net $R=10$.

We include state-of-the-art reconstruction methods for different acquisitions to compare and estimate the approximate upper bound of performance for both CAIPI and SMILE acquisitions. The acquisition + reconstruction combinations includes CAIPI + \{SG (+ G), SPSG (+ G), RS\}, SMILE + HICU 2D. The (+ G) is the optional in-plane GRAPPA reconstruction after separating the slices, when there is in-plane undersampling. In the Supporting Information, we provide a spatial-temporal reconstruction comparison using different additional regularization methods. Specifically, we include SENSE with stationary wavelet transform (SWT) as the same spatial-temporal reconstruction approach for both SMILE and CAIPI acquisitions. This comparison aims to determine whether the  observed performance differences arise from the acquisition method (CAIPI vs SMILE), rather than from differences in the chosen reconstruction techniques.




All reconstructions were fine-tuned based on two separate datasets to maximize the SER. The kernel size in $k_x, k_y$ that maximized the reconstruction SER was chosen for MB = 3:
\begin{align*}    
    \text{SG + G}:& [13, 13], [17, 17] &    \text{SPSG + G}:&[17, 17], [17, 17]\\
    \text{RS}:& [33, 11]& \text{HICU 2D}:& [6, 18],    
\end{align*}
for $\text{MB} = 5$:
\begin{align*}
    \text{SG}:& [11, 11]& \text{SPSG}:&[19, 19]\\
    \text{RS}:& [27, 5]&\text{HICU 2D}:& [6, 30].
\end{align*}
Note that these experimental optimized kernel shapes and sizes are consistent with our analysis of the reconstruction kernel corresponding to the FOV. The reconstruction SER and SSIM were analyzed using repeated measure ANOVA.

\paragraph{Prospective Undersampled Perfusion Study}
13 resting perfusion scans with MB = 3, a 3$\times$ extended FOV (net $R=8$) and 30 resting perfusion scans with MB = 5, a 5$\times$ extended FOV (net $R=10$) were prospectively acquired using  SMILE to achieve whole left ventricle coverage (6 or 9 slices for MB = 3, 5 or 10 slices for MB =  5). Imaging was performed with breath-holding during an intravenous injection of 0.075 mmol/kg of Gadovist at 3 cc/second over 60 heartbeats in a 3T scanner (Siemens MAGNETOM Skyra). SMILE GRE sequence parameters included: TE: 1.29 ms, TR: 2.9 ms, Time of delay (TD) from SAT pulse to first PE line acquisition for Tissue function: 60 ms, resolution 1.5 $\times$ 1.5 $\times$ 10 mm$^3$, FOV 288 $\times$ 864 mm $^2$ for MB = 3 or 288 $\times$ 1440 mm$^2$ for MB = 5. We adopt CAVA sampling, net acceleration rate $R=8$ for MB = 3 (72 measured PE lines), and $R=10$ for MB = 5 (96 measured PE lines). Spatial-temporal HICU + stationary wavelet (SWT) was used as reconstruction. Images were graded by three cardiologists/radiologists using a 5-point Likert scale. The acquisition pipeline is shown in Fig.~\ref{fig: Acquisition pipeline}.
\begin{figure}[t]
    \centering
    \includegraphics[width= \columnwidth]{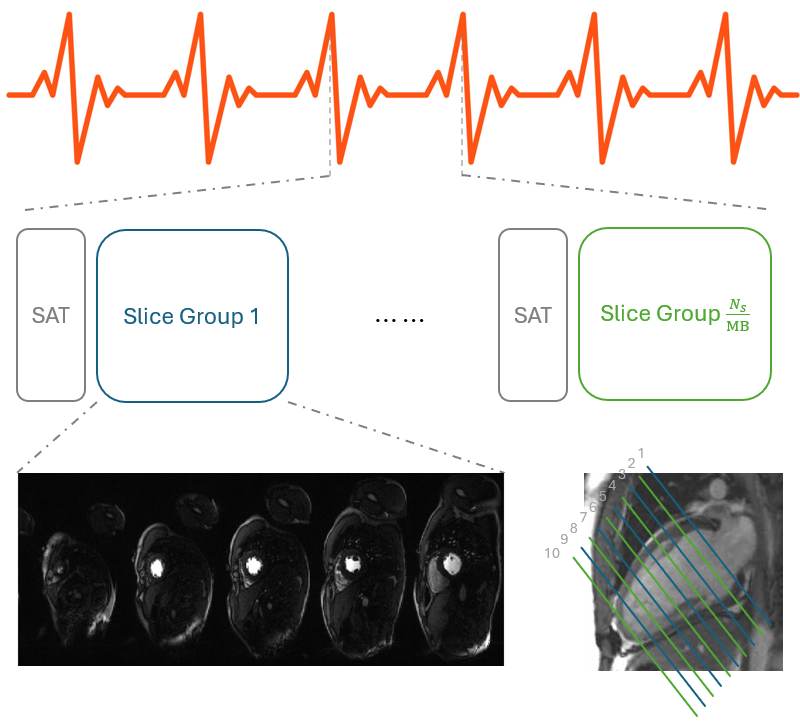}
    \caption{Acquisition pipeline of SMILE perfusion. Within each heartbeat, saturation recovery pulse train is sequentially applied for the interleaved slice group 1,$\cdots$, $N_s/\text{MB}$. This figure uses MB = 5, $N_s=10$ slices as illustration.}
    \label{fig: Acquisition pipeline}
\end{figure}

\section{Results}
Fig.~\ref{fig:Retrospective MB=3} and Fig.~\ref{fig:Retrospective MB=5} show the representative reconstruction and absolute value of error map, we can see that the CAPI + existing state-of-art methods show inferior image quality as compared to SMILE + reconstruction methods, where the slice leakage from other slices is more easily observed in the absolute error map in the CAPI results. We have also included 2D + t reconstruction and comparison movies in the Supporting Information.

The summarized radar plots for SER (dB) and SSIM are shown in Fig.~\ref{fig:SER SSIM}, where there is a 4.4 dB SER and 0.20 SSIM between all CAIPI + 2D reconstruction and SMILE + HICU 2D for MB = 5, $R$ = 5, and a 7.6 dB SER gap and 0.18 SSIM gap for MB = 3, $R$ = 6. Both show statistically significant differences ($p< 0.05$).  To assess the effects of the sampling strategy (CAIPI vs SMILE) we performed reconstruction using SENSE + SWT as an example spatial-temporal reconstruction that can be applied to both techniques.  In supplemental Fig.~S1, we show that a significant part of the advantage is related to the sampling strategy in SMILE which improved SER and SSIM by up to 8.0 dB and 0.12. The additional improvement in performance results from HICU as the reconstruction technique which provides high image quality even for higher acceleration rates.

Fig.~\ref{fig:Prospective MB=3} and Fig.~\ref{fig:Prospective MB=5} show the representative frame of the prospective experiment with good image quality without slice leakage. The grades of MB = 3, $R$ = 8 and MB = 5, $R$ = 10 are 4.1 $\pm$ 0.7 and 3.5 $\pm$  1.0.

Fig.~\ref{fig:Temporal fidelity} shows good agreement of signal intensity of spatial-temporal HICU + SWT reconstruction compared to per frame HICU 2D reconstruction.  This demonstrates that this spatiotemporal acceleration approach does not markedly impact temporal fidelity of the data.

Fig.~\ref{fig:SMILE resting perfusion defect and LGE} shows the detected SMILE resting perfusion defect, along with the corresponding late gadolinium enhancement (LGE) phase-sensitive inversion recovery (PSIR) images. The patients have diffuse LGE from amyloid. The SMILE resting perfusion defects show diffuse subendocardial hypoperfusion, which corresponds with the most severe regions of LGE. 

\begin{figure*}    
    \centering
    \includegraphics[width=\textwidth]{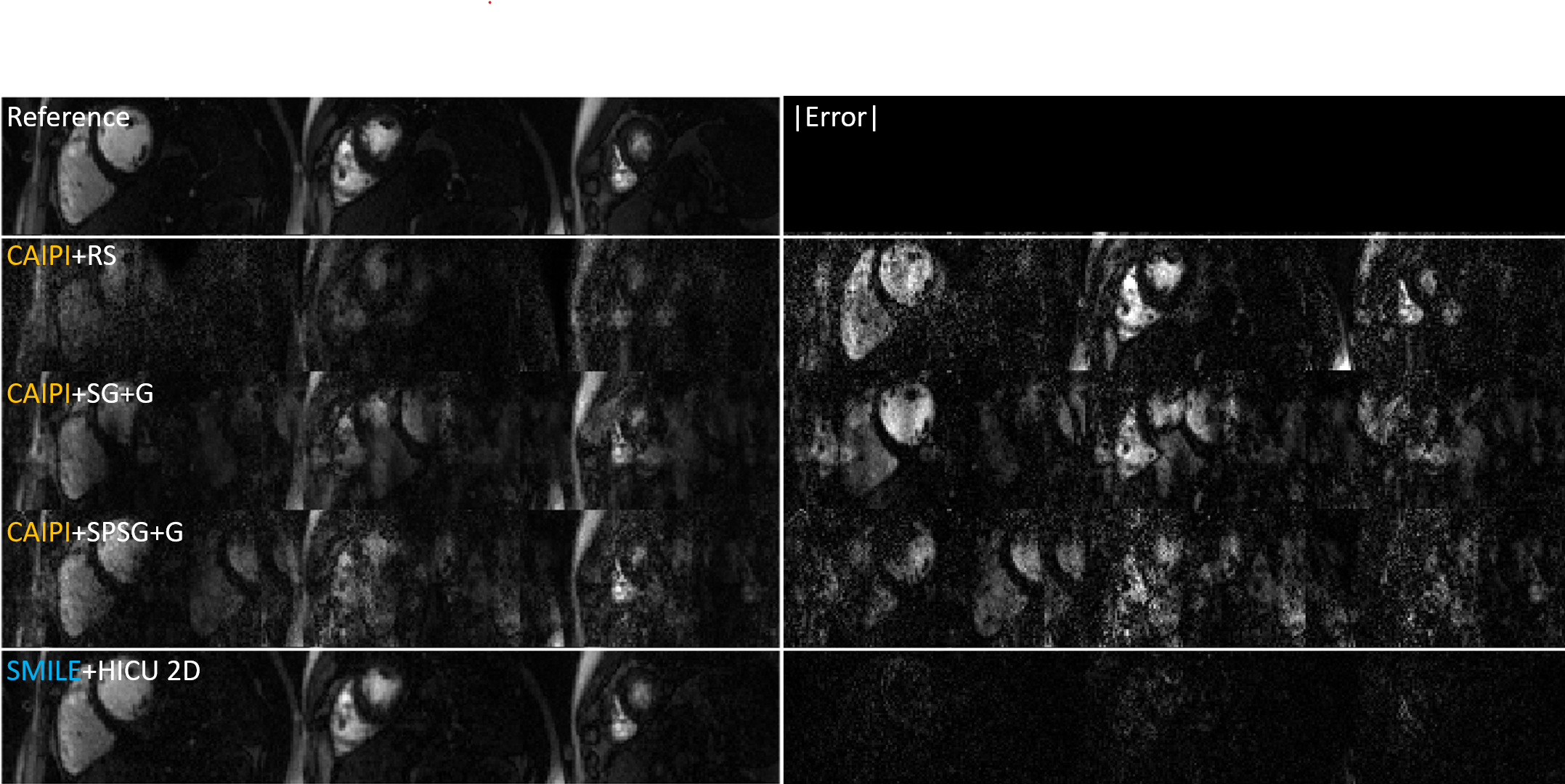}    
    \caption{Representative retrospective result for MB factor  = 3, $R$ = 6. The absolute error is scaled by 2, and windowed to match the reconstruction intensity range. }
    \label{fig:Retrospective MB=3}
\end{figure*}

\begin{figure*}
    \centering
    \includegraphics[width=\textwidth]{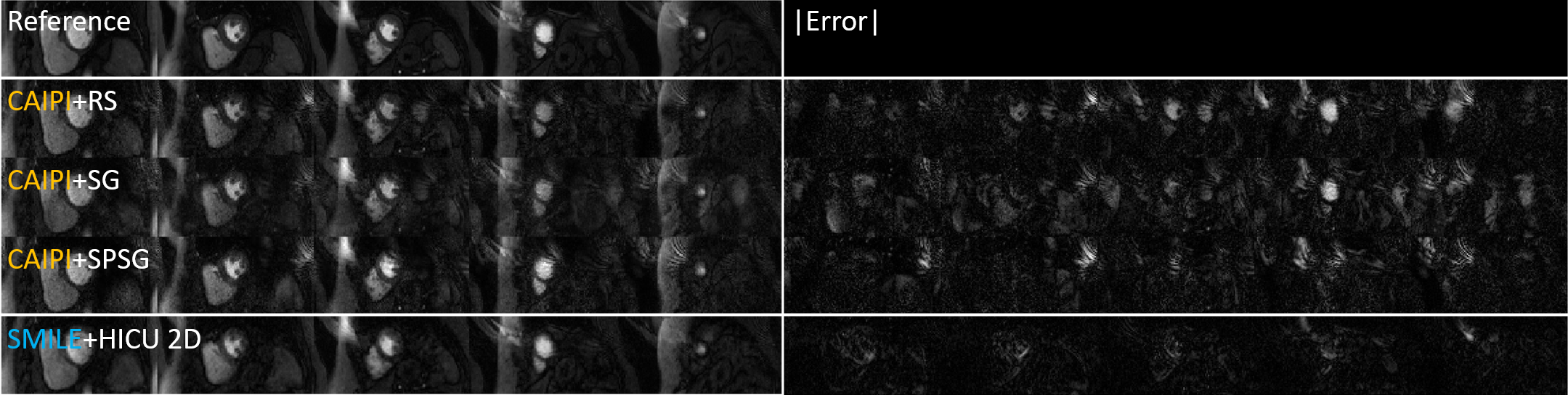}
    \caption{Representative retrospective result for MB factor = 5, $R$ = 5. The absolute error is scaled by 2, and windowed to match the reconstruction intensity range. }
    \label{fig:Retrospective MB=5}
\end{figure*}

\begin{figure*}
    \centering
    \includegraphics[width=\textwidth]{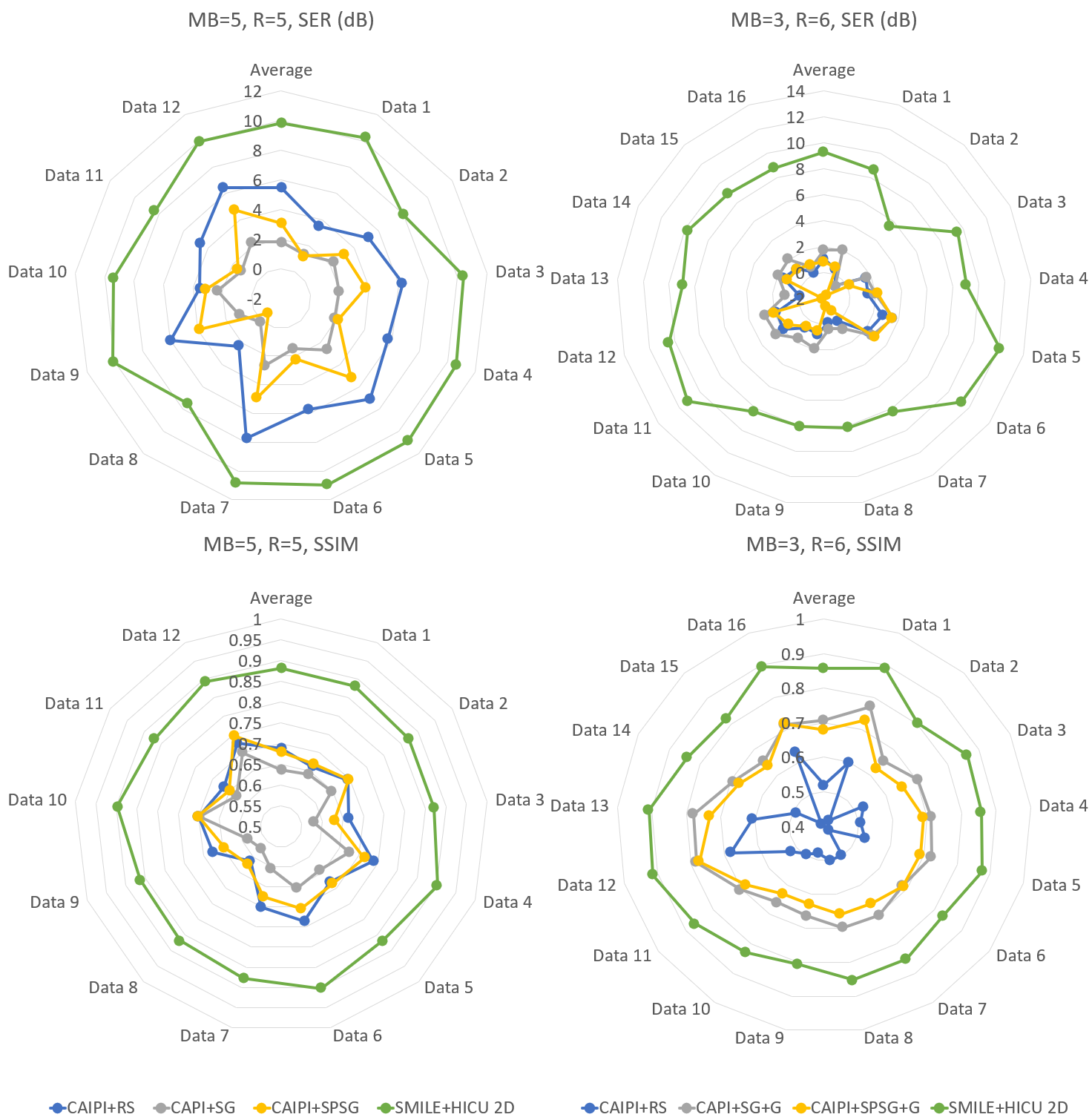}
    \caption{SER (Top), SSIM (Bottom) radar plots for MB = 5, $R$ = 5 (Left), MB = 3, $R$ = 6 (Right) experiments. Each colored dot represents one acquisition (CAIPI or SMILE) + reconstruction method combination for one dataset. The averaged performance is included in the plot as well. For both experiments SMILE + HICU significantly outperforms the CAIPI techniques.}
    \label{fig:SER SSIM}
\end{figure*}

\begin{figure*}
    \centering
    \includegraphics[width=\textwidth]{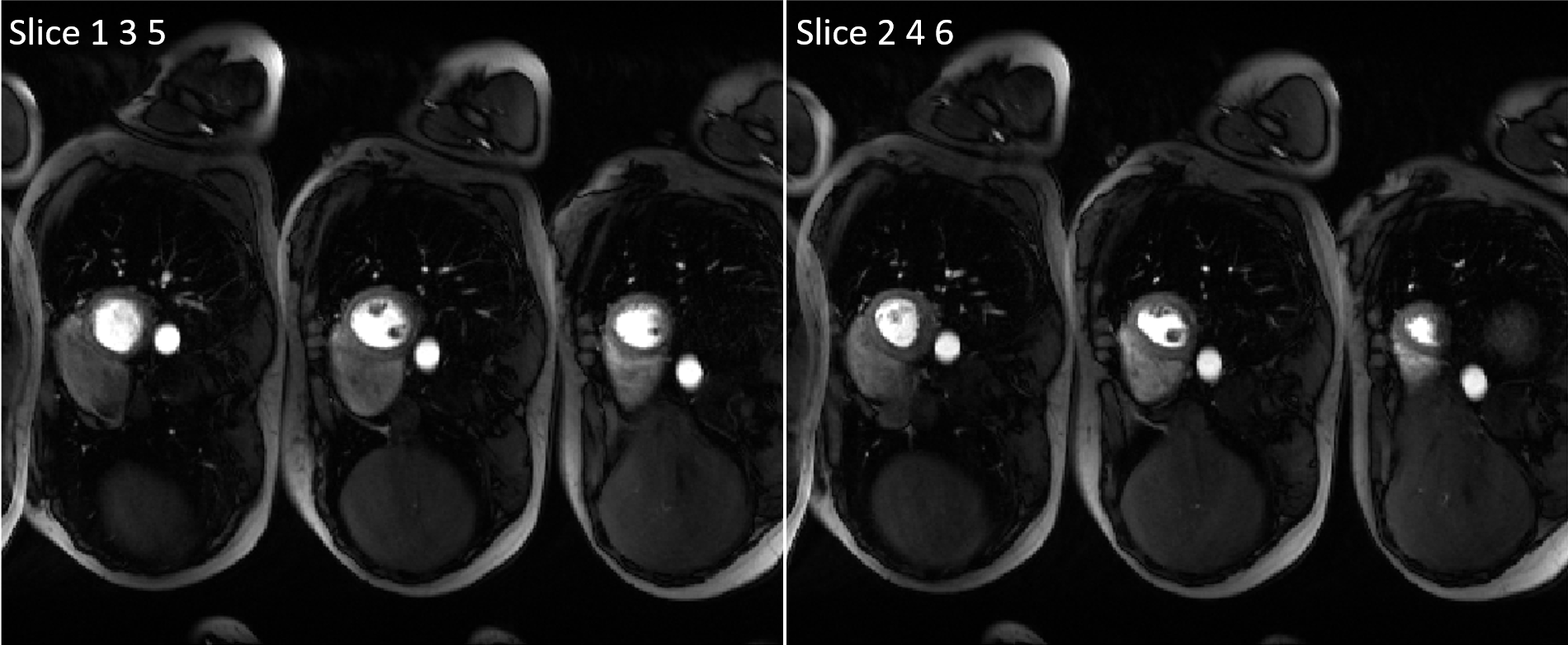}
    \caption{One representative prospective MB = 3, $R$ = 8 SMILE perfusion reconstructed by HICU + SWT. }
    \label{fig:Prospective MB=3}
\end{figure*}

\begin{figure*}
    \centering
    \includegraphics[width=\textwidth]{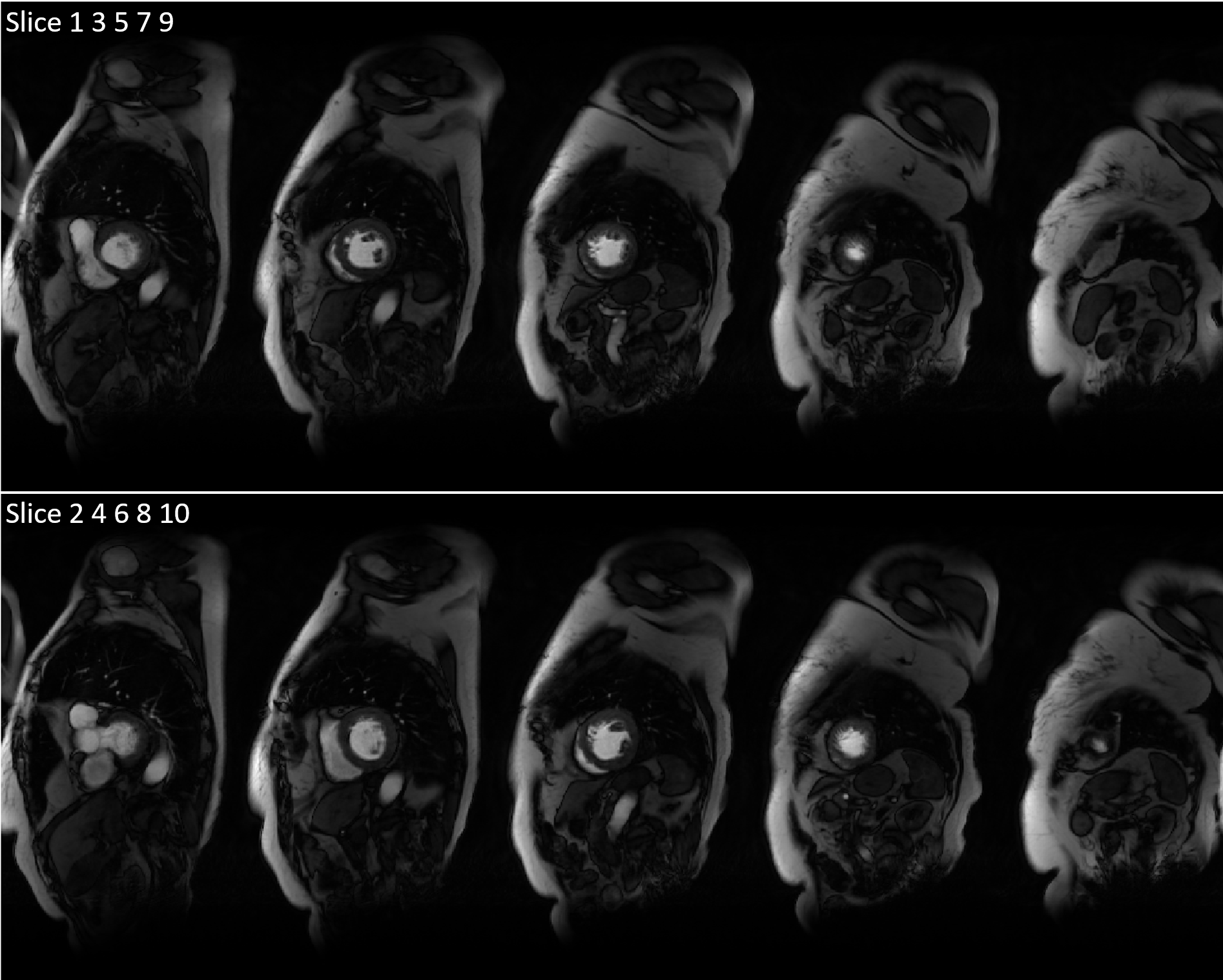}
    \caption{One representative prospective MB = 5, $R$ = 10 SMILE perfusion reconstructed by HICU + SWT.}
    \label{fig:Prospective MB=5}
\end{figure*}

\begin{figure*}
    \centering
    \includegraphics[width=\textwidth]{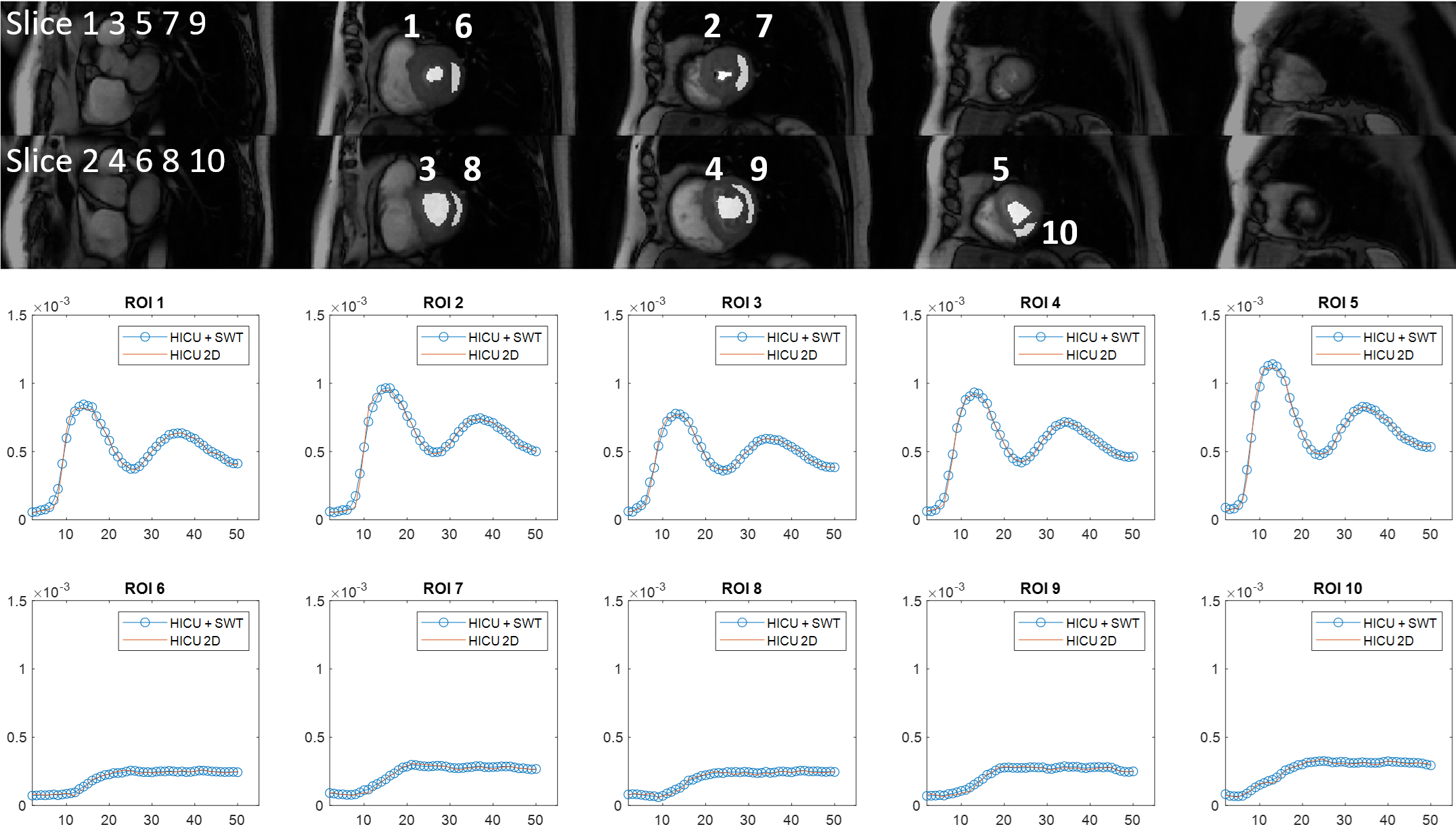}
    \caption{Representative temporal fidelity of ROIs for one prospective MB = 5, $R$ = 10 experiment. Top: ROIs are overlayed onto the reconstruction images. Bottom: HICU + SWT is compared against per frame HICU 2D reconstruction in terms of average signal intensity inside the ROIs. The proton density images acquired at the beginning of the acquisition are excluded in the curve plot here. We demonstrate that HICU + SWT, which includes temporal regularization, has good fidelity with HICU which does not have any temporal regularization.}
    \label{fig:Temporal fidelity}
\end{figure*}

\begin{figure*}
    \centering
    \includegraphics[width=\textwidth]{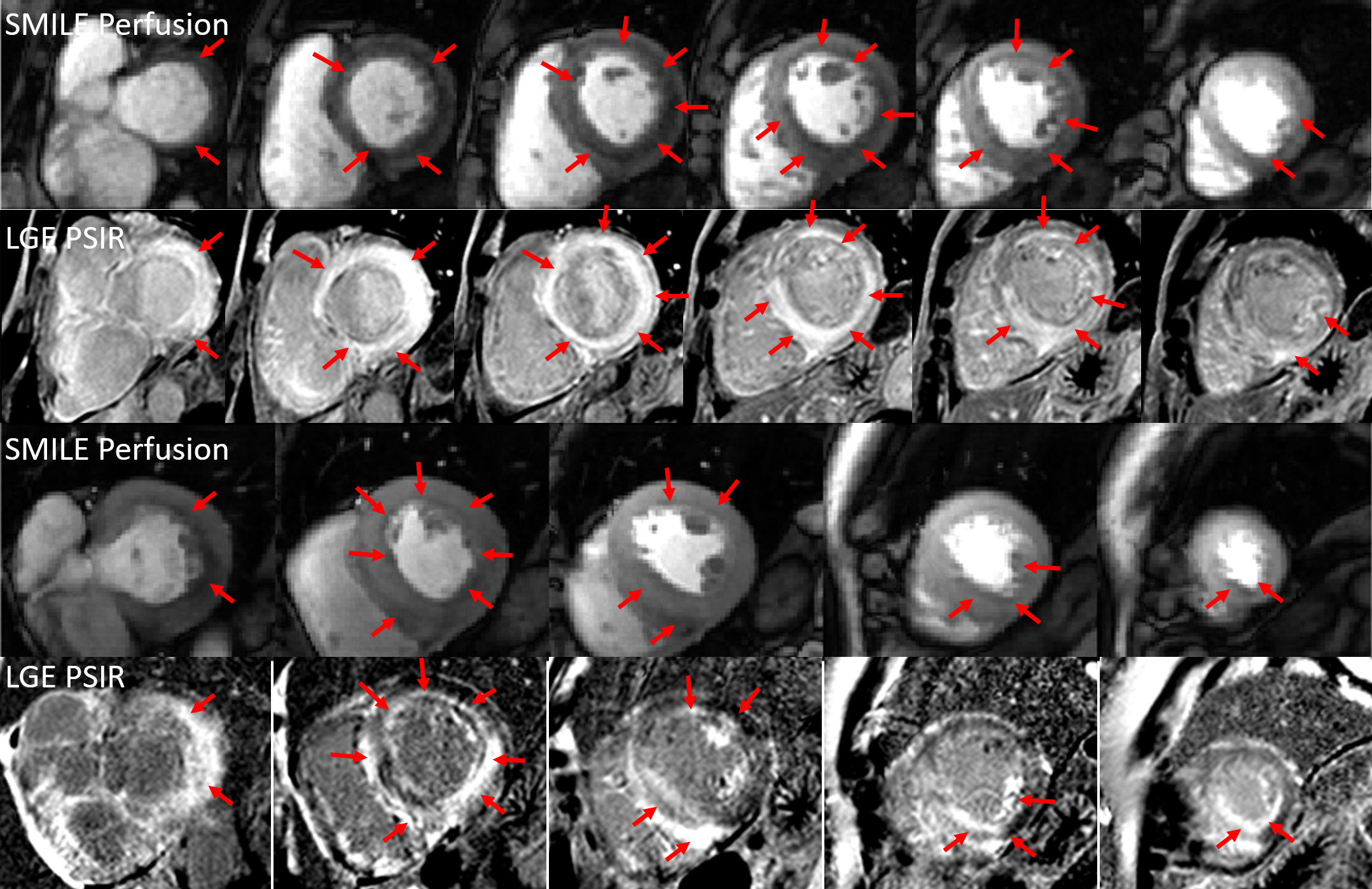}
    \caption{Two representative illustration of perfusion defect detected in rest MB = 6 and 5 within $5\times$ extended FOV, SMILE perfusion defect (Top) and scars validated in slices of single band LGE PSIR images (Bottom).}
    \label{fig:SMILE resting perfusion defect and LGE}
\end{figure*}

\section{Discussion}
From the retrospective experiments, improved image quality is observed in SMILE reconstruction both qualitatively and quantitatively.  The results demonstrate that the performance gain comes from both the acquisition strategy (SMILE vs CAIPI) as well as HICU as a higher performance reconstruction strategy. 

From the prospective experiment, high-quality perfusion images were obtained for MB of  3, with a 3$\times$ extended FOV and a net $R$ = 8, and for MB of 5, with a 5$\times$ extended FOV and net $R$ = 10.

Our proposed framework provides a comprehensive theoretical analysis of both acquisition and reconstruction. Compared to existing methods, such as CAIPI-based SMS perfusion \cite{nazir2018simultaneous} with MB = 2 and an in-plane acceleration rate of $R=5$, and \cite{yang2019whole} with MB up to 4 and an in-plane acceleration rate of $R=1.25$, as well as pseudo-random downsampled balanced SSFP SMS perfusion with extended FOV \cite{mcelroy2022simultaneous} at MB = 3 and net $R=5$, experimental results demonstrate that our proposed approach achieves high image quality with MB up to 5 and net $R$ values up to 10. The high MB enables all slices covering the left ventricle to be acquired during the same cardiac phase. This can have implications for quantitative perfusion imaging where there may be different biases in the perfusion values across the cardiac cycle.  Furthermore, if only one (5 slices) or two (10 slices) saturation blocks are required, each SMS block can be acquired over 250 ms, allowing for better sequence optimization and acquisition.  Of note, given the incoherent sampling adopted, the images appear free of dark-rim artifacts despite a longer temporal acquisition footprint.

The SMILE approach has additional advantages in that the extent of the extended FOV and the MB can be manipulated independently.  Thus, depending on extent of the region of interest and its location, within the FOV, a larger number of slices can be excited within a specific extended FOV.  For example, 6 slices could be simultaneously excited while only extending the FOV by a factor of 5, which is shown in first case of Fig.~\ref{fig:SMILE resting perfusion defect and LGE}. Additionally, the net $R$ can be determined by the desired total number of PE lines (e.g. the desired temporal footprint) independently of the size of the extended FOV or the MB factor.  Slices do not need to be uniformly shifted within the extended FOV but could exploit differences in the extent of the region of interest. As the standard CAIPI approach samples only a subset of possible locations, its sampling degrees of freedom is limited. In contrast, SMILE allows sampling across all PE lines in the extended FOV k-space, providing significantly greater flexibility. This enables more easily optimized sampling strategies compared to conventional SMS CAIPI.

The reconstruction methods specifically designed for CAIPI acquisition, SG and SPSG assume that each individual slice's k-space can be linearly predicted by the neighboring k-space of the superimposed k-space, where SPSG has an additional ``slice blocking'' component enforced in the prediction kernel extraction. RS reformulates the superimposed k-space into an undersampled k-space of extended FOV along the RO direction and then applied SPIRiT to reconstruct. One possible reason for the potential gain of RS (in MB = 5, $R$ = 5 of Fig.~\ref{fig:SER SSIM}) over SG or SPSG for the CAIPI acquisition is its adopted perspective of extended FOV in the RO dimension results in less bias in the kernel training than for the 3D perspective, which is intrinsically adopted in SG and SPSG. Another possible factor is that SPIRiT is better than GRAPPA in exploiting linear prediction from all sampled or non-sampled neighbor k-space points\cite{lustig2010spirit}.

The HICU reconstruction method adopted for SMILE is a fast general calibrationless method that utilize the k-space linear predication information among all dimensions and can be applied to both 2D, 2D + t, and other higher dimension cases. The main reason we adopt it as our main reconstruction choice for SMILE is its capability to reconstruct different imaging scenarios with mainly the kernel size and rank choice modification, competitive reconstruction quality, faster speed, and smaller memory footprint compared to other k-space low rank methods. Specifically, when it reduces to the 2D case, i.e., HICU 2D, its converged reconstruction quality is very similar to LORAKS \cite{haldar2013low} and SAKE \cite{shin2014calibrationless} but with one to two order of magnitude computation speedup (Can be seen in Table.~1 of reference \cite{zhao2021high}). Because they are essentially recovering the same convolutional operator given the same kernel size and rank, the speedup is coming from the leveraging Johnson-Lindenstrauss lemma to do the random projection of the null space, and center-out strategy to faster extract the null space information using a small portion of the center k-space.

We experimentally observed better reconstruction quality for SMILE acquisition over CAIPI acquisition which likely results from: 1. much larger sampling degree of freedom of SMILE and 2. intrinsically adopted 2D slice separation perspective in many CAIPI reconstructions leading to potential bias and ill-conditioning. The larger sampling degree of freedom is even more prominent when considering additional dimensions. e.g., in the Supporting Information, SWT can benefit SMILE + SENSE more than CAIPI + SENSE with an additional gain in SER for retrospective downsampled study ($p=0.003$) because SMILE's capability of allowing arbitrary PE sampling will further benefit incoherent sampling for the entire k-space and increase the performance gap.

In addition, as shown in the optimal tuned kernel size for RS, SMILE with HICU 2D and HICU, our previous analysis about the extended FOV effect on the k-space reconstruction kernel size is validated. The approximate coil image k-space support and the k-space reconstruction kernel size should increase proportionally to the FOV extended factor $n$. This might be the most counter-conventional practice that needs to be adopted for these two perspectives. We also noticed the kernel size for SG and SPSG for MB = 5 is the largest one which is limited by the calibration data size that makes the kernel weight deterministic. The larger kernel size is also consistent with the 3D perspective, where the inter-slice coil sensitivity map discontinuity makes the kernel size need to be larger to capture the coil sensitivity map information.

SMILE can enable slice-leakage-free perfusion imaging with full heart coverage. Cardiac perfusion MRI is an application, which will clearly benefit from this approach as the total amount of time data can be sampled is limited both by the RR interval duration and the transient first-pass passage of contrast. Other cardiac applications that would benefit are situations where for physiological reasons, there is an advantage of acquiring multiple slices simultaneously. There are situations where this may be advantageous for cine, LGE, or parametric mapping. 

The SMILE strategy is a general approach and can be used for other MRI applications. Fundamentally the strategy would also be compatible with Blipped-CAIPI\cite{setsompop2012blipped} and is not limited to RF-excitation modulation. There are likely more optimized sampling strategies and phase modulation strategies that could be employed using the SMILE approach.  This warrants further study.

Although spatial-temporal reconstruction improves SER and SSIM compared to 2D reconstruction (as shown in the Supporting Information), image quality can be compromised by temporal regularization, leading to blurring due to respiratory motion. This issue arises because clinical perfusion datasets were acquired during free breathing. To mitigate severe blurring, we avoid assuming strong spatial-temporal linear prediction and instead pair HICU 2D reconstruction with a light SWT denoiser. For prospectively acquired SMILE perfusion, data was acquired during breath-holding. For patients who can perform an adequate breath-hold maneuver during the acquisition, the spatial-temporal HICU + SWT can improve the reconstruction quality, but for other patients, it will slightly worsen the reconstruction. To ensure consistency across prospective experiments, we adopt spatial-temporal HICU + SWT reconstruction for all patients. The regularization introduced blurring is true for all spatial-temporal perfusion reconstructions, and not a specific limitation of SMILE.  

One limitation in our study is that our prospective study included only breathholding rest perfusion cases. We are currently exploring approaches to incorporating motion correction into the reconstruction as we and others have demonstrated previously \cite{yang2019whole, scannell2019robust, chen2024non}. Once motion correction is implemented, appropriate regularized spatial-temporal SMILE reconstruction should also have improved quality even during free breathing stress perfusion cases. Secondly, we only performed studies with rest perfusion, and thus cannot assess the ability of this technique to detect inducible perfusion abnormalities.  This will be the focus of our ongoing work on this technique. We demonstrate that SMILE can demonstrate resting perfusion abnormalities corresponding to LGE.  Furthermore, we demonstrate that the technique has high temporal fidelity for the myocardium and blood pool, which suggest that findings should be similar when applied during adenosine stress.

Another limitation of our study is that it was conducted at 3T and not at 1.5T. The lower SNR at 1.5T may affect performance \cite{oshinski2010cardiovascular}, particularly in high-acceleration settings, which is common for all MRI techniques when comparing 1.5T and 3T.  At lower field strength, the total acceleration factor may need to be decreased due to lower SNR. This effect could be mitigated for example by further exploring alternative acquisition strategies such as SSFP \cite{mcelroy2022simultaneous}. We plan to optimize the technique for both field strengths in future work.

\section{Conclusions}
In this work, we develop and implement the SMILE technique and apply it to cardiac perfusion to achieve high resolution and whole left ventricle coverage even at high multi-band factors. SMILE perfusion combined with existing parallel imaging and compressed sensing techniques, when appropriately adjusting ACS or k-space kernel sizes, achieves high image quality even at high total acceleration factors. It also demonstrates improved reconstruction quality without slice leakage, aligning with theoretical analysis.

\section{Data Availability Statement}
The code and example data are shared in \url{https://github.com/Zhao-Shen/SMILE}.


\clearpage
\section{Supporting Information}
\subsection{SMILE Reconstruction Kernel Analysis}
Assuming the $c$-th continuous coil sensitivity map, $\bm{S}_c$, has a bandwidth size of $[B_x, B_y]$. As illustrated in Fig.~1 in the main document, the discrete k-space bandwidth of the coil sensitivity map for a single-slice FOV, with a sampling grid of $\Delta k_x, \Delta k_y$, amounts to $[C_x, C_y] = \big[\lfloor \tfrac{B_x}{\Delta k_x} \rfloor, \lfloor \tfrac{B_y}{\Delta k_y} \rfloor\big]$. For an $n \times$ extended FOV along PE, this value changes to $[C_x, D_y] = \big[\lfloor \tfrac{B_x}{\Delta k_x} \rfloor, \lfloor \tfrac{n B_y}{\Delta k_y} \rfloor\big]$.

Different sources \cite{haldar2020linear, jacob2020structured} such as smooth coil sensitivity maps, limited image support, transform domain sparsity, and exponential modeling give rise to the following relationship:
\begin{definition}[Approximate Linear Prediction]
    Each discrete k-space point $\rho[i]$ can be roughly expressed as a linear combination of appropriately chosen neighboring k-space points $\rho[i-j]$, combination weight $\omega[j]$ maintaining shift invariant. 
    \begin{equation}         
        \rho[i] = \sum_{j\in \Omega} \omega[j] \rho[i-j] + \epsilon ~\forall i, 
    \end{equation} 
    where $\Omega$ denotes indices of neighboring points, influencing the linear prediction kernel shape across dimensions like $k_x, k_y, z$ or $k_z$ and coil. $\epsilon$ implies approximation error due to additive noise and assumption deviations.
\end{definition}

Shift-invariant linear prediction equates to valid convolution across the dimensions involved, where convolution does not exceed boundaries. Linear prediction kernels can become annihilating kernels: the linear prediction kernel minus the extraction kernel forms an annihilating kernel \cite{haldar2020linear}, i.e.,
\begin{equation}    
    \sum_{j\in \Omega \cup \{0\}} \alpha[j] \rho[i-j] + \epsilon = 0  ~\forall i, \alpha[j] = 
    \begin{cases}
        \omega[j], &j \not = 0\\
        -1, &j = 0
    \end{cases}
\end{equation}   
Below is the sufficient condition for ideal linear prediction/annihilation of discrete k-space points solely based on a linearly independent perfect band-limited coil sensitivity map, where the linear independence refers to linear independent $\text{vec}(\bm{S}_c)$, achievable via coil compression \cite{zhang2013coil}. 

\begin{theorem}\label{thm1}
    Assume a rectangle of size $[C_x, C_y]$ bounds the discrete k-space support of $N_c$ linearly independent 2D coil sensitivity map $\mathcal{F}(\bm{S}_c)$. Then an annihilating kernel of size $[E_x, E_y, N_c]$ across $k_x, k_y$, coil exists given 
    \begin{equation}
        \label{eq: theorem 1}
        E_x E_y N_c > (C_x + E_x -1)(C_y + E_y -1).
    \end{equation}
\end{theorem}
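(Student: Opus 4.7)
The plan is to reduce this to a standard linear-algebraic dimension count in the k-space convolution picture, exploiting the fact that each coil's k-space is a convolution of the underlying object's k-space with the coil sensitivity's spectrum. I would first model each coil k-space as $\rho_c = \hat{x} \ast \mathcal{F}(\bm{S}_c)$, where $\hat{x}$ denotes the k-space of the underlying object. Treating the candidate kernel $\{\alpha_c[j]\}$ as a sequence on a box of size $E_x \times E_y \times N_c$, the target identity $\sum_{c,j} \alpha_c[j]\,\rho_c[i-j] = 0$ for all $i$ and for \emph{every} $\hat{x}$ is implied by the image-independent condition
\[
    \sum_{c=1}^{N_c} \alpha_c \ast \mathcal{F}(\bm{S}_c) \;=\; 0,
\]
viewed as a sequence on $\mathbb{Z}^2$. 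This is the key reduction; from here $\hat{x}$ plays no role, and the problem reduces to finding a nonzero tuple $\alpha$ in the nullspace of a purely sensitivity-dependent linear map.

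Next I would count dimensions. The unknowns $(\alpha_c[j])$ live in a space of dimension $E_x E_y N_c$. For each $c$, the convolution $\alpha_c \ast \mathcal{F}(\bm{S}_c)$ has support contained in the Minkowski sum of an $E_x \times E_y$ box with a $C_x \times C_y$ box, i.e.\ in a rectangle of size $(C_x + E_x - 1) \times (C_y + E_y - 1)$; the same bound holds for the sum over $c$. Consequently the linear map $\alpha \mapsto \sum_c \alpha_c \ast \mathcal{F}(\bm{S}_c)$ has codomain of dimension at most $(C_x + E_x - 1)(C_y + E_y - 1)$. Under the hypothesis \eqref{eq: theorem 1}, rank--nullity forces the kernel of this map to be nontrivial, producing a nonzero $\alpha$ that solves the displayed equation.

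The role of the linear-independence assumption on $\mathrm{vec}(\bm{S}_c)$ is to certify that the annihilator so obtained is genuinely acting across the coil dimension rather than arising from a redundancy among the sensitivity maps themselves: without it, one could produce formal nullspace elements that collapse the coil axis trivially and fail to exploit the band-limit structure we are paying for.

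The main obstacle, in my estimation, is not the count itself but justifying the reduction cleanly. One must argue that the \emph{approximate} linear-prediction language of the preceding definition (which tolerates an error $\epsilon$) becomes exact once the sensitivity maps are assumed to be strictly bounded by a $C_x \times C_y$ rectangle, and that the support bound used in the count remains valid even when the actual support of $\mathcal{F}(\bm{S}_c)$ does not fill its bounding rectangle---in which case the stated inequality is sufficient but may be loose. Once these reductions are in place, existence follows immediately from rank--nullity.
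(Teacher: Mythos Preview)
Your proposal is correct and follows essentially the same route as the paper: the paper's proof is a one-line dimension count, observing that \eqref{eq: theorem 1} makes the number of unknowns $E_x E_y N_c$ exceed the number of linear constraints $(C_x+E_x-1)(C_y+E_y-1)$ imposed by the support of the convolution $\sum_c \alpha_c * \mathcal{F}(\bm{S}_c)$, and then appeals to rank--nullity (citing \cite{zhang2011parallel,gungor2014subspace} for the same argument). Your write-up is in fact more explicit than the paper's---you spell out the reduction from the annihilation identity on $\rho_c$ to the image-independent condition on the sensitivity spectra, which the paper leaves implicit; note, however, that the linear-independence hypothesis is not actually consumed by the existence count itself, so your paragraph on its role is commentary rather than a necessary step.
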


\begin{proof}    
    \eqref{eq: theorem 1} ensures the count of unknowns (kernel elements) $>$ equation count (support size of convolution result), as attested in Appendix \cite{zhang2011parallel} and Section 3.2.1 \cite{gungor2014subspace}.
\end{proof}

If we relax $\bm{dom}(E_x), \bm{dom}(E_y)$ from $\mathbb{Z}_{>0}$ to $\mathbb{R}_{>0}$, by inequality of arithmetic and geometric means (AM-GM), and solving roots of a quadratic polynomial w.r.t. $\sqrt{E_xE_y}$, then the smallest kernel size for single-slice FOV is
\begin{equation}
    \label{eq: kernel size single slice FOV}
    \left[E_x^\star, E_y^\star\right] = \frac{1+\sqrt{N_c}}{N_c-1} \left[ (C_x-1), (C_y-1) \right].
\end{equation}
Likewise, for $n\times$ extended FOV in PE, the smallest kernel is
\begin{equation}
    \label{eq: kernel size extended FOV}
    \left[E_x^\star, E_y^\star\right] = \frac{1+\sqrt{N_c}}{N_c-1} \left[ (C_x-1), (D_y-1) \right].
\end{equation}
Comparing \eqref{eq: kernel size single slice FOV} and \eqref{eq: kernel size extended FOV}, the $E^\star_y$ increases by $\frac{D_y-1}{C_y-1}$. Although the result is for $E_x, E_y \in \mathbb{R}_{>0}$, the insight holds asymptotically for $E_x, E_y \in \mathbb{Z}_{>0}$ increases. For example, supposing $n \in \mathbb{Z}_{>0}$, $C_y = \lfloor \frac{D_y}{n} \rfloor >> 1$, if for single-slice FOV $[E_x^\star, E_y^\star] = [6,6]$, for $n\times$ FOV along PE, $[E_x^\star, E_y^\star] \approx [6,6n]$, which is rather atypical for 2D parallel imaging \cite{shin2014calibrationless, gungor2014subspace, griswold2002generalized, lustig2010spirit, zhang2011parallel,haldar2013low, zhao2021high, zhao2022high}. Using an insufficient kernel size can impact reconstruction performance and may have been a limitation to prior extended FOV techniques.

\subsection{2D+T CAIPI and SMILE reconstruction comparison}

\begin{figure*}
    \centering
    \includegraphics[width=\textwidth]{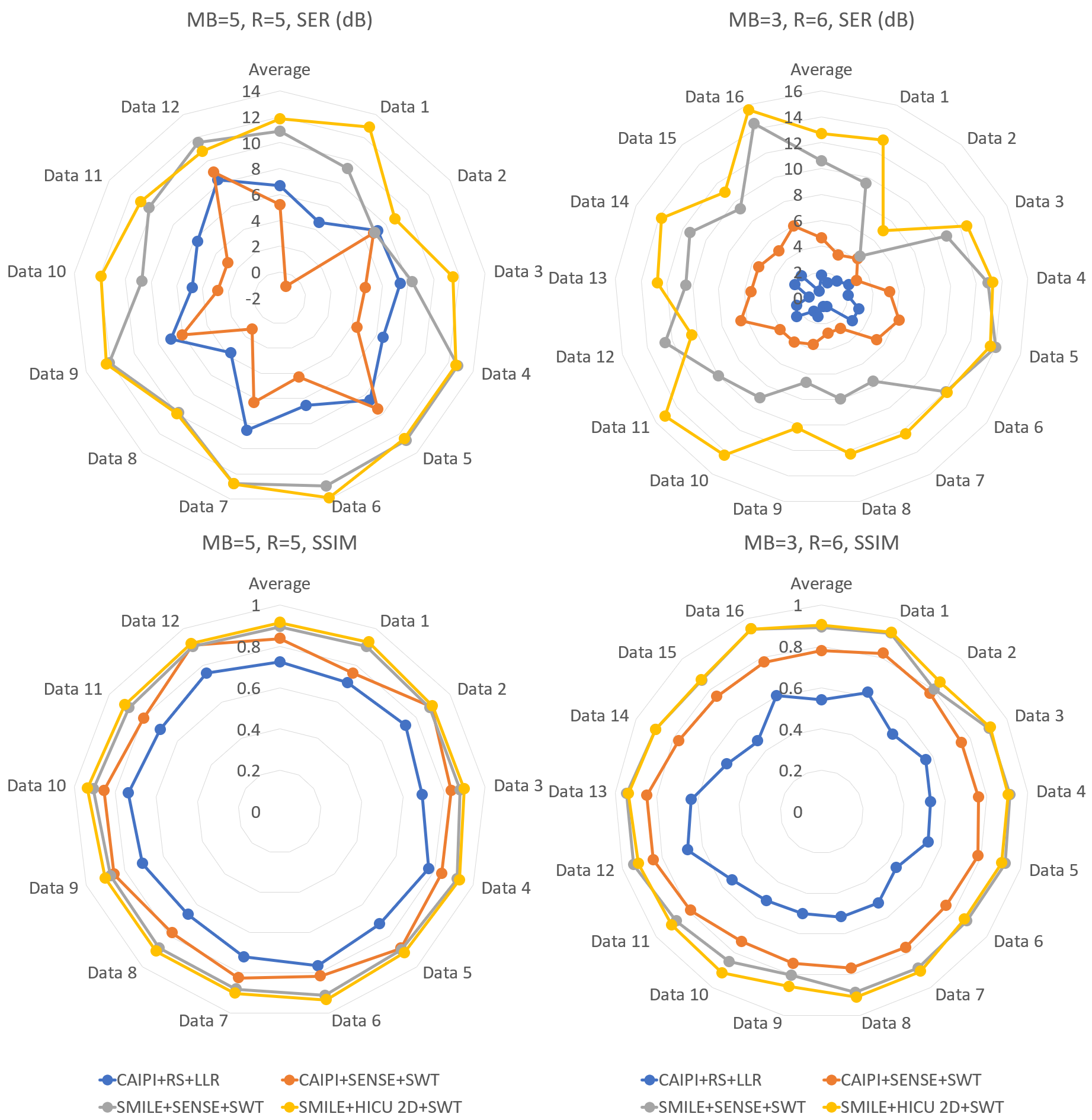}
    \caption{SER SSIM radar plots for spatial-temporal reconstruction for MB = 3, 5 experiments. In addition to the CAIPI/SMILE + 2D reconstruction technique, we also compare spatial-temporal reconstruction for CAIPI and SMILE, the regularization term includes locally low rank (LLR) \cite{demirel2021improved} and sparsity using stationary wavelet (SWT). If we choose and compare the combination with the best SER performance, for MB = 5, SMILE + HICU 2D + SWT is better than CAPI + RS + LLR by 5.2 dB, for MB = 3, SMILE + HICU 2D + SWT is better than CAIPI + SENSE + SWT by 8.0 dB, both show with statistically significant differences ($p<0.05$).}
    \label{fig:SER SSIM 2D + T}
\end{figure*}

\begin{figure*}    
    \centering
    \includegraphics[width=\textwidth]{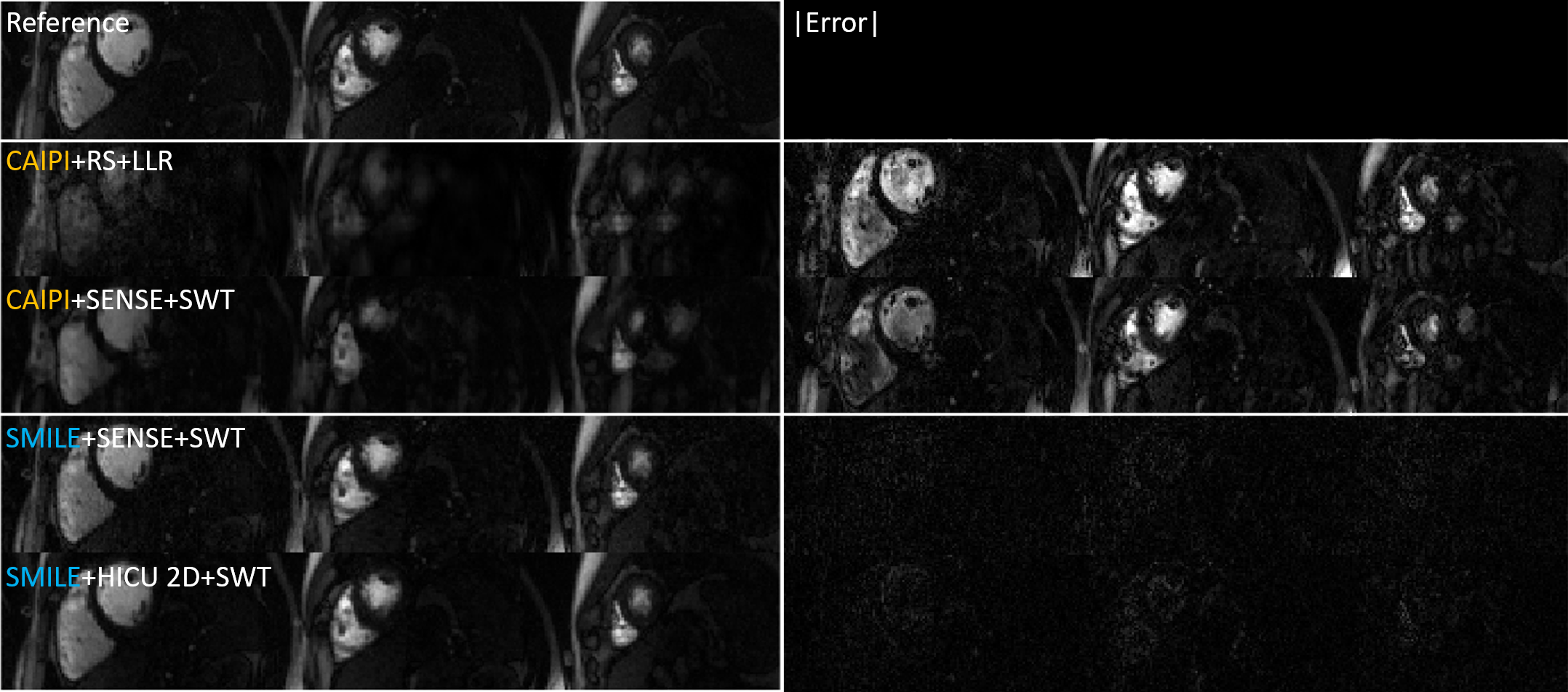}    
    \caption{Representative retrospective spatial-temporal reconstruction result for MB = 3, $R=6$ the absolute error is scaled by 2, and windowed to match the reconstruction intensity range. This corresponds to the same case shown in Fig.~4. }
    \label{fig:Retrospective MB=3 2D T}
\end{figure*}

\begin{figure*}
    \centering
    \includegraphics[width=\textwidth]{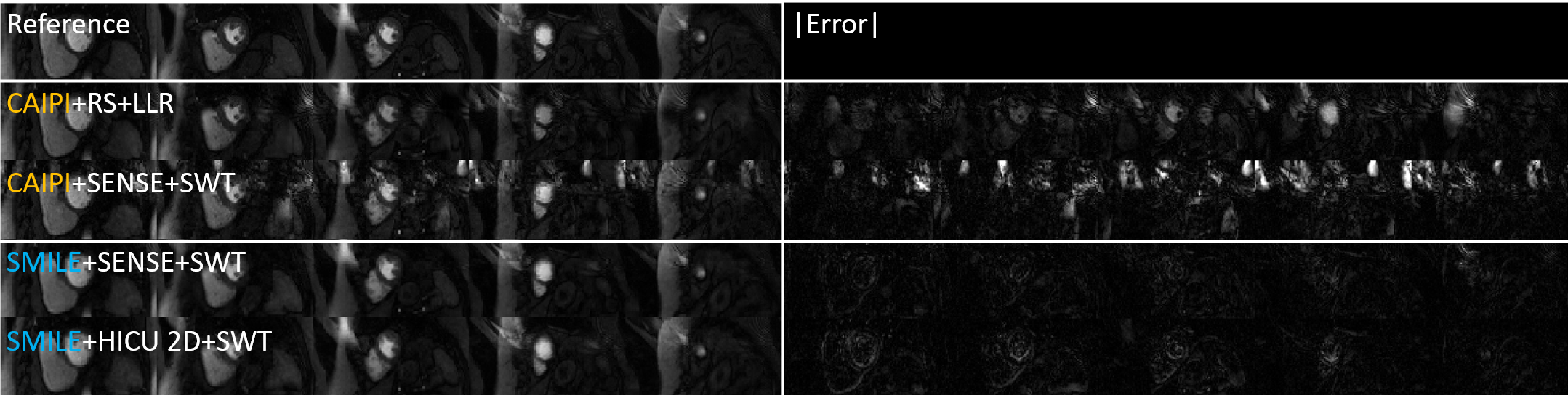}
    \caption{Representative retrospective spatial-temporal reconstruction result for MB = 5, $R=5$ the absolute error is scaled by 2, and windowed to match the reconstruction intensity range. This corresponds to the same case shown in Fig.~5.}
    \label{fig:Retrospective MB=5 2D T}
\end{figure*}
Sup.~Video.~V1. One representative prospective MB = 3, $R=8$, $3\times$ extended FOV, SMILE perfusion reconstructed by HICU + SWT. Left: slice 1 3 5, right: slice 2 4 6.

Sup.~Video.~V2. One representative prospective MB = 5, $R=10$, $5\times$ extended FOV, SMILE perfusion reconstructed by HICU + SWT. Up: slice 1 3 5 7 9, bottom: slice 2 4 6 8 10.

Sup.~Video.~V3. presents a representative retrospective result for MB = 3 and $R=6$. The absolute error is scaled by a factor of 2 and windowed to match the reconstruction intensity range. The video summarizes all acquisition + (2D or 2D + t) reconstruction combination for CAIPI and SMILE we have done. It corresponds to the same case shown in Fig.~4 and Fig.~S2.

Sup.~Video.~V4. presents a representative retrospective result for MB = 5 and $R=5$. The absolute error is scaled by a factor of 2 and windowed to match the reconstruction intensity range. The video summarizes all acquisition + (2D or 2D + t) reconstruction combination for CAIPI and SMILE we have done. It corresponds to the same case shown in Fig.~5 and Fig.~S3.
\newpage

\end{document}